\newtheorem{proposition}{Proposition}
\newcommand{\autorefalg}[2]{\hyperref[#1]{Algorithm~\ref{#1}(#2)}}
\begin{document}

\title{Deterministic quantum trajectory via imaginary time evolution}

\author{Shivan Mittal}
\affiliation{Theoretical Division, Los Alamos National Laboratory, Los Alamos, New Mexico 87544, USA}
\affiliation{Department of Physics, University of Texas at Austin, Austin, Texas 78712, USA }

\author{Bin Yan}
\affiliation{Theoretical Division, Los Alamos National Laboratory, Los Alamos, New Mexico 87544, USA}

\date{\today}

\begin{abstract}
Stochastic quantum trajectories, such as pure state evolutions under unitary dynamics and random measurements, offer a crucial ensemble description of many-body open system dynamics. Recent studies have highlighted that individual quantum trajectories also encode essential physical information. Prominent examples include measurement-induced phase transitions, where a pure quantum state corresponding to fixed measurement outcomes (trajectories) exhibits distinct entanglement phases, depending on the measurement rate. However, direct observation of this effect is hindered by an exponential postselection barrier, whereby the probability of realizing a specific trajectory is exponentially small. We propose a deterministic method to efficiently prepare quantum trajectories in polynomial time using imaginary time evolution and, thus, overcome this fundamental challenge. We demonstrate that our method applies to a certain class of quantum states, and argue that universal approaches do not exist for any quantum trajectories. Our result paves the way for experimentally exploring the physics of individual quantum trajectories at scale and enables direct observation of certain postselection-dependent phenomena.
\end{abstract}

\maketitle

Quantum trajectories describe the stochastic evolution of pure quantum states. For example, a single spin precessing in a magnetic field while undergoing periodic projective measurements experiences random ``jumps'' that interrupt its otherwise unitary evolution~\cite{Weber2014Mapping}. Averaging over an ensemble of such trajectories provides an effective description of open quantum dynamics~\cite{Daley2014Quantum} arising from coupling to an external environment.

A \emph{single} quantum trajectory of a many-body system can also reveal critical information about the system. A striking example is the measurement-induced phase transition~\cite{Chan2019Unitary,Li2018Quantum,Skinner2019Measurement}, which has gathered considerable attention in recent years~\cite{Fisher2023Random,Noel2022Measurement,Google2023Measurement,Choi2020Quantum,Gullans2020Dynamical}. Consider a random quantum circuit where single-qubit midcircuit measurements are applied at a rate $p$~(Fig.~\ref{fig:illustration}). At long times (large circuit depth), the resulting pure state of the qubits exhibits two distinct phases, separated by a critical value of $p$. These phases are characterized by their entanglement patterns, either short range (following an area law) or long range (following a volume law). Crucially, such phase transitions manifest only at the level of individual quantum trajectories.

To physically observe these transitions, one must prepare multiple copies of the same trajectory, reproducing exactly the same sequence of measurement outcomes to gather statistical data. This presents a significant experimental challenge because there are $2^M$ possible measurement outcomes
for a circuit with $M$ single-qubit measurements. Thus, to observe a specific trajectory, the experiment must be repeated $\sim 2^M$ times. This exponential scaling constitutes the so-called postselection barrier, which severely limits the ability to probe single-trajectory physics at large scales~\cite{Gullans2020Scalable,Fisher2023Random}. The most recent direct experimental observation of measurement-induced phase transitions on IBM’s quantum computers~\cite{Koh2023Measurement} took seven months to run with just $14$ qubits---one of the most computationally intensive tasks ever performed on a quantum computer, highlighting the severity of this issue. Existing methods address this problem with indirect solutions~\cite{Li2023Cross,Dehghani2023Neural,Garratt2024Probing,Mcginley2024Postselection,Li2023Decodable,Passarelli2024Many,Buchhold2022Revealing,Buchhold2022Revealing}---for instance, by proposing alternative metrics, assuming simulability of quantum dynamics, or resorting to specific structures of the underlying physical systems. Efficiently and directly preparing multiple copies of a given quantum trajectory remains a long-standing challenge. It is even unclear whether such a task is fundamentally prohibited~\cite{Fisher2023Random}. 

\begin{figure}[b!]
   \includegraphics[width=\linewidth]{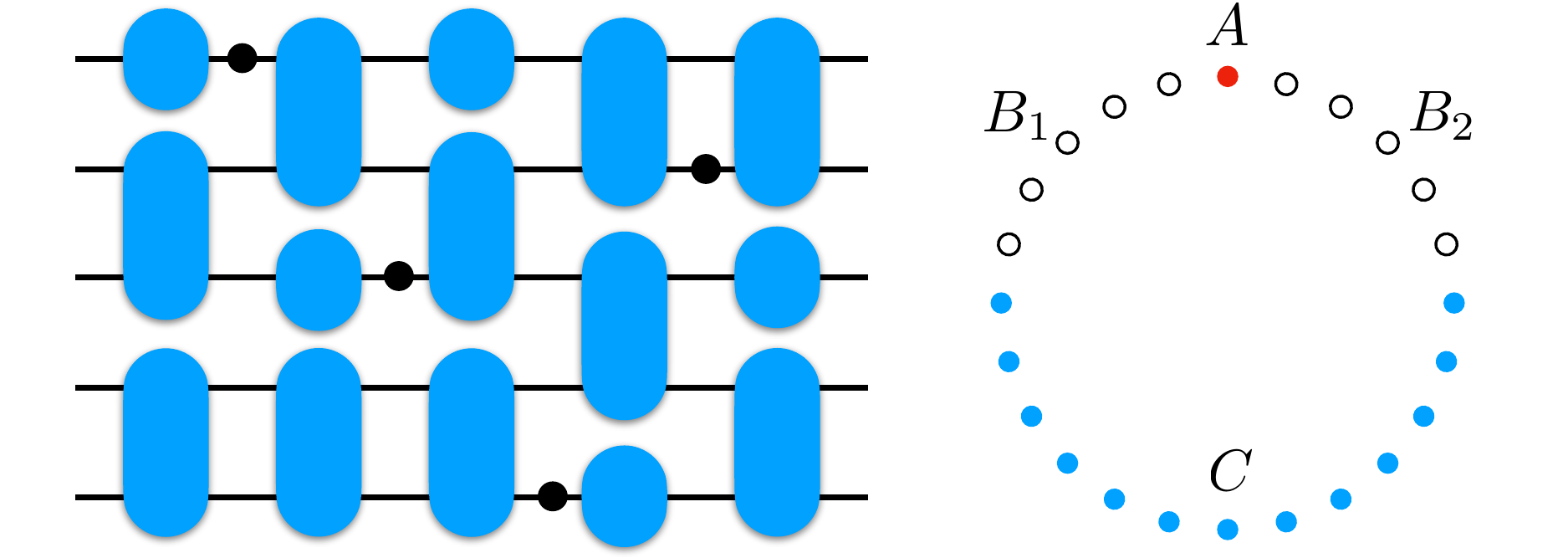}
    \caption{Left: a random quantum circuit is subject to single-qubit measurements (black dots) at a rate $p$. Depending on the value of $p$, the final state of the qubits can exhibit different phases with distinct entanglement patterns. For $p$ greater than the critical value $p_c$, the state of the qubits exhibits area law entanglement, i.e., the subsystem von Neumann entropy is proportional to the area of the boundary of the subsystem. Whereas for $p<p_c$, the entanglement satisfies the volume law. Right: we consider a one-dimensional qubit system in a periodic boundary condition. The cluster correlation can be quantified by the mutual information between a local subsystem $A$ and $C$, denoted by $I(A,C)(r)$, where $C$ is the cluster that includes all qubits whose distance to $A$ is greater than $r$.}  \label{fig:illustration}
\end{figure}

Here, we present a polynomial-time approach to deterministically prepare quantum trajectories of measured random quantum circuits relevant for measurement-induced phase transition in the area law regime. Our key innovation is leveraging imaginary time evolution to project a quantum state onto a target measurement outcome with certainty. We also argue that the existence of an efficient method for preparing any quantum trajectory would imply implausible complexity theoretic consequences; it would serve as
an extremely powerful resource for quantum computers by allowing them to solve any computational problems in probabilistic
polynomial time ($\mathsf{PP}$).

\emph{Imaginary time evolution---}To formulate the postselection problem in a general setting, consider an arbitrary quantum state $|\psi\rangle$ of $n$ qubits subjected to a local projective measurement, i.e., a measurement applied to a small subsystem of the qubits. Under a conventional measurement scheme, the many-body state collapses randomly into one of the possible postmeasurement states with probability given by Born’s rule. Our goal is to deterministically evolve $|\psi\rangle$ into a specific postmeasurement state, bypassing the inherent randomness of quantum measurement. 

Our approach is the following. We design a local Hamiltonian $H$ for the subsystem to be measured such that the ground state of $H$ corresponds to the desired postmeasurement state. For instance, to deterministically project a single target qubit into the state $|0\rangle$, we choose $H=\sigma_z$, where the Pauli operator $\sigma_z$ is supported on that qubit (in the convention that $|0\rangle$ is the ground state of $\sigma_z$). Then, we imaginary time evolve the initial state $|\psi\rangle$ for large imaginary time $\beta$. The resulting state
\begin{equation}\label{eq:imaginary}
     |\psi_\beta\rangle \equiv \mathcal{N}_\beta e^{-\beta H}|\psi\rangle
\end{equation}
converges to the exact target postmeasurement state in the limit $\beta$ tends to infinity. Here, $\mathcal{N}_\beta$ is the normalization factor consistent with the fact that imaginary time evolution does not preserve measurement outcome probabilities.

In practice, the state can only be evolved for finite $\beta$, thus, incurring an error to the ideal target state (which corresponds to infinite $\beta$). Fortunately, this error suppresses exponentially as $\beta$ increases linearly. The fidelity between the state at sufficiently large $\beta$ and the target state can be bounded from below by,
\begin{equation}\label{eq:fidelity}
\begin{aligned}
F(|\psi_\beta\rangle,|\psi_\infty\rangle) \ge 1 -  c e^{-2\beta \Delta},
\end{aligned}
\end{equation}
where $\Delta$ is the energy gap above the ground state of $H$ and $c \equiv (1-P)/P$, where $P$ is the probability of observing the target outcome if a physical measurement were performed. The proof of this lower bound is provided in Supplemental Material (SM). This exponential suppression of error is the key to the polynomial complexity of our deterministic postmeasurement state preparation method. 

Imaginary time evolution is a well-defined mathematical concept. However, contrary to unitary evolution, it cannot be directly realized in physical systems through Hamiltonian evolution. More precisely, since it is not even a quantum channel, there are no \emph{deterministic} physical processes that can realize~(\ref{eq:imaginary}) for any input state. Consequently, existing quantum algorithms for effectively implementing imaginary time evolution rely on postselecting certain outcomes. This imposes limitations on their applicability for our purposes because we cannot use those algorithms to deterministically prepare the target postmeasurement states that we seek if the imaginary time evolution itself is realized indeterministically. 

However, one might take a step back and look for a unitary evolution that effectively realizes~\eqref{eq:imaginary} that works only for a specific input state $|\psi\rangle$. That is, the unitary is $|\psi\rangle$-dependent. Such a deterministic quantum imaginary time evolution (DQITE) algorithm was developed in~\cite{Motta2020Determining}. The algorithm works as follows: suppose $H$ is a local Hamiltonian that applies nontrivially to a subsystem $A$ (see Fig.~\ref{fig:illustration}, right, for an illustration). One picks a larger regime $D$ that includes $A$ as its subsystem (the union of $A$ and $B_{1,2}$ in Fig.~\ref{fig:illustration}). A unitary $U_D$ with support on $D$ is guaranteed to exist that mimics the imaginary time evolution up to an error that is bounded by the correlation between $A$ and the complement of $D$, referred to as $C$ as shown in Fig.~\ref{fig:illustration}. Such correlations can be quantified by the mutual information between $A$ and $C$, denoted by $I(A,C)$. This is called the cluster correlation and is in contrast with the usual two-point correlation because $C$ is a cluster of local subsystems rather than a single local subsystem. The unitary $U_D$ is constructed via tomography on $D$. Crucially, as demonstrated in~\cite{Motta2020Determining}, for a polynomial complexity of the DQITE algorithm, the cluster correlation must decay exponentially fast in the distance $r$ between $A$ and $C$. This condition is fulfilled for the states produced in measured random quantum circuits with a measurement rate above the phase transition point, as we will discuss in the following.

\emph{Scaling analysis---}We have introduced a strategy for deterministically preparing a postmeasurement state via the DQITE algorithm when the quantum state satisfies the short-range cluster correlation property. Now, we demonstrate that it takes a polynomial in $M$ runtime to prepare a quantum trajectory specified by a total of $M$ measurement outcomes, each effectively realized via the procedure outlined above, up to $\epsilon$ error in trace distance.

For each deterministic measurement outcome preparation process, there are two sources of errors, one that stems from finite instead of infinite $\beta$, and one that comes from the effective simulation of the nonunitary imaginary time evolution via unitary algorithm. For simplicity, we demand that both the errors, quantified in trace distance, be the same and denote them by $\epsilon_\beta$. The final error that accumulates after $M$ postmeasurement state preparations via our procedure is upper bounded by $2\epsilon_\beta M$, which we demand be below $\epsilon$, thus
\begin{equation}\label{eq:betaerror}
    \epsilon_\beta \leq \frac{\epsilon}{2M}.
\end{equation}
To achieve this accuracy, we demand that fidelity in \eqref{eq:fidelity} be such that
\begin{equation}
    c^{1/2}e^{-\beta\Delta} \leq \epsilon_\beta,
\end{equation}
which, in turn, imposes a lower bound on the imaginary time $\beta$,
\begin{equation}\label{eq:beta}
    \beta \geq \frac{1}{\Delta}\log{\frac{c^{1/2}M}{\epsilon}}.
\end{equation}
The DQITE ``Trotterizes'' the imaginary time $\beta$ into multiple steps. Therefore, the runtime $T_\beta$ of the DQITE algorithm corresponding to preparing each measurement outcome also depends on the number of steps $n_\beta$ for Trotterizing $\beta$. For a fixed Trotter error, $n_\beta$ is proportional to $\beta$ and, thus, bounded through~\eqref{eq:beta}. According to~\cite{Motta2020Determining}, 
\begin{equation}\label{eq:runtime}
    T_\beta = \mathcal{O}(n^2_\beta/\epsilon_\beta) = \mathcal{O}(\beta^2/\epsilon_\beta).
\end{equation}

Bringing everything together, implementing the quantum imaginary time evolution algorithm for $M$ times with a final error $\epsilon$ requires a total runtime on the order of
\begin{equation}\label{eq:scaling}
\mathcal{O}\left(M^2\epsilon^{-1}\Delta^{-2}\log^2{(c^{1/2}M\epsilon^{-1})}\right).
\end{equation}
This concludes the polynomial scaling in $M$. We stress that this polynomial scaling relies on a polynomial sampling cost for performing tomography on subsystem $D$, requiring an exponentially decaying cluster correlation within the system and a finite Born probability $P$ of target postselection state. These conditions are fulfilled for the problem of measurement-induced phase transitions considered in this work. Rigorous analysis regarding these conditions is presented in Appendix B of SM.

We recall that $\Delta$ is the energy gap above the ground state of the Hamiltonian $H$ of the imaginary time evolution. For single-qubit projective measurements in the $Z$ basis, $H=\pm\sigma_Z$ and $\Delta = 2$. We also recall that $c = (1-P)/P$, where $P$ is the probability of observing the target outcome if a physical measurement were performed. An immediate remark is in order regarding the value of $c$; scaling~\eqref{eq:scaling} seems to suggest that for $n$ qubits the total runtime scales polynomially in $n$ even when $P$ is exponentially small in $n$ (when $c$ is exponentially large in $n$). This significantly contradicts our intuition that one cannot amplify an exponentially small signal in polynomial time. Resolution to this puzzle lies in the fact that the DQITE algorithm in~\cite{Motta2020Determining} requires local tomography to construct $U_D$---for an exponentially small $P$, one must perform an exponentially large number of measurements to achieve a finite precision. This fact was not discussed in~\cite{Motta2020Determining} and is not reflected in the runtime scaling~\eqref{eq:runtime}. For the case of local measurements and strongly correlated state $|\psi\rangle$, the order of $c$ is considered $\mathcal{O}(1)$. We numerically verified this condition for the problem of measurement-induced phase transitions in the following section.

\begin{figure}[t!]
    \centering
    \includegraphics[width=\linewidth]{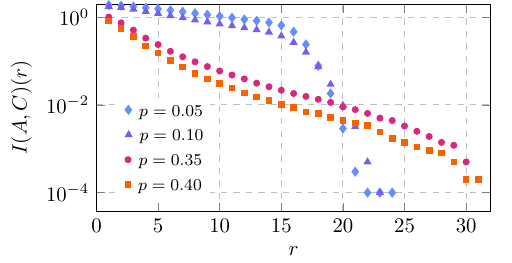}
    \caption{Decay of $I(A,C)(r)$ in the output state of $n=64$ qubit circuit with $L=n$ layers and at various measurement rates $p$. The critical measurement rate is $p_c = 0.16$~\cite{Li2019Measurement}.}
    \label{fig:cluster}
\end{figure}

\emph{Measurement-induced phase transitions---}In this section, we study the cluster correlations in the states produced by random circuits with midcircuit measurements, and show that they are short range for measurement rates greater than the critical value, thereby enabling the application of the deterministic postselection method we have developed.

We consider an even number $n$ of qubits. Let one brickwork layer of two-qubit gates be given by $U_{\text{odd}} U_{\text{even}}$, where $U_{\text{odd}} \equiv U_{1, 2} U_{3, 4} \cdots U_{n - 1, 0}$ and $U_{\text{even}} \equiv U_{0, 1} U_{2, 3} \cdots U_{n - 2, n - 1},$ and where $U_{i, j}$ denotes independent and identically distributed uniformly random two-qubit Cliﬀord gates~\footnote{We verified for small system sizes that all our conclusions carry over to measured random quantum circuits with Haar random 2-qubit gates, which is in accordance with the general expectation that exact unitary $2$-design gate sets give rise to the similar entanglement features output states of those circuit~\cite{Li2019Measurement}.} acting on qubits $i$ and $j$. The circuit consists of a total of $L$ such brickwork layers. Between any two layers each qubit is measured in the $Z$ basis with probability $p$ (referred to as the measurement rate). The structure is illustrated in Fig.~\ref{fig:illustration}.

We also consider periodic boundary condition. Suppose the system is divided into four regions: $A$, $B_1$, $B_2$ and $C$ (refer to Fig.~\ref{fig:illustration}, right), where $A$ is the single qubit subject to measurement, and $B_1$ and $B_2$ have the same size $r$, which can be viewed as the distance between $A$ and $C$. As discussed in the previous sections, to deterministically postselect qubit $A$, we implement imaginary time evolution $e^{\pm \beta \sigma_Z}$ using the DQITE algorithm, which requires local tomography in the union of $A$, $B_1$, and $B_2$. To guarantee a polynomial runtime, the mutual information between $A$ and $C$, $I(A,C)(r)$, must decay exponentially in $r$. This is not always true. For example, in Haar random states $I(A,C)(r)$ stays a constant until $r$ is greater than half of the total system, after which $I(A,C)(r)$ drops steeply to zero because correlations are scrambled over the entire state. The above intuition for Haar random states remains true for states generated by the random circuits with small measurement probabilities. For large $p$, when entanglement in the state satisfies an area law, the state becomes short-range-correlated and can result in an exponential decay of the mutual information.

In Fig.~\ref{fig:cluster}, we illustrate the decay in $I(A,C)(r)$ in the output state of the circuit at different measurement rates. As expected, $I(A,C)(r)$ decays exponentially for $p$ greater than a critical value. We can extract the said decay rate for the curves at various measurement rates $p$ and illustrate our findings. However, this approach gives an estimate of the decay rate that is prone to finite size effects of the small finite total system size. To mitigate finite size effects, in Fig.~\ref{fig:collapse}, we fix $r = n/16$ and compute $I(A,C)(r=n/16)$ for various total system size $n$. This allows us to perform a data collapse in order to extract the decay rate. The data collapse clearly reveals two distinct phases separated by a critical measurement rate $p_c$. For $p>p_c$, the cluster correlation decays exponentially. For $p<p_c$, a constant residual cluster correlation always exists. This residual correlation makes the DQITE algorithm inapplicable in this regime.

Figure~\ref{fig:fidelity} displays a numerical simulation of DQITE. Note that the DQITE breaks the Clifford structure. As a result we can only simulate the circuit with brute force, which significantly limits the size and depth of simulation. Therefore, here, we only present the deterministic postselection for a \emph{single} measurement and demonstrate that the procedure works with the error rate decaying as expected. Trajectories with repeated measurements can be deterministically implemented measurement by measurement. The full procedure is elaborated in Appendix C of SM. We take the $n=20$ qubit state $\ket{\psi}$ that results after $L = n$ layers of a measured random circuit at measurement rate $p=0.5 > p_c$ and $p=0.1 < p_c$. In fact, we employ Haar random two-qubit unitary gates for this set of numerics, since the DQITE algorithm involves gates that are outside the Clifford group. Then, we compute the infidelity between (i) the target postmeasurement state $\ket{\psi_{\infty}}$ after a single-qubit measurement on the first qubit of $\ket{\psi}$ and (ii) the state $\ket{\psi_{\beta}}$ produced by DQITE on $\ket{\psi}$ for increasing $r$. The simulations (Fig.~\ref{fig:fidelity}) demonstrate that in the area law region (states with exponentially decaying cluster correlations) increasing $r$ leads to faster convergence to the lower saturation values, while in the volume law region the fidelity remains small. This effectively allows us to separate two phases even though the deterministic protocol works only in the area law region.

\begin{figure}[t!]
    \centering
\includegraphics[width=\linewidth]{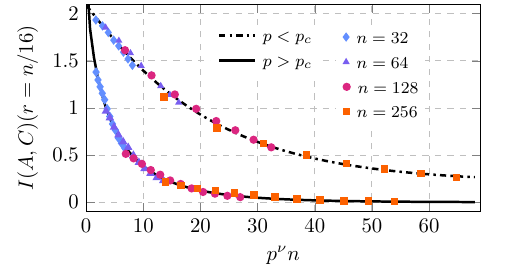}
    \caption{Cluster correlations at fixed $r=n/16$ for various values of measurement rates $p$ and number of qubits $n$. The data collapse parameter $\nu$ is obtained by minimizing the fitting error. For the area law entanglement regime, $\nu = 1.70$ with an exponential fitting (solid curve) $f(x)=2.68\exp{(-0.42x^{0.68})}$. For the volume law entanglement regime, $\nu = 0.75$ with an exponential fitting (dashed curve) $f(x)=1.88\exp{(-0.04x^{1.06}) + 0.21}$. The critical rate is $p_c=0.16$.}
    \label{fig:collapse}
\end{figure}

\emph{Universal postselection---}We have introduced an efficient deterministic postselection method for quantum states that satisfy exponential clustering. A natural question is whether such efficiency can also be achieved for generic quantum states. Throughout this discussion, we restrict attention to postmeasurement outcomes with \emph{finite} Born probabilities: if the desired outcome has an exponentially small probability, no deterministic postselection procedure can be efficient in general~\cite{PostBQP}.

Here, we show that efficient universal deterministic postselection procedures are highly unlikely to exist. In particular, we prove that the existence of such a universal algorithm implies the highly implausible collapse of complexity classes, 
$$\mathsf{BQP} = \mathsf{PP},$$
where $\mathsf{BQP}$ is bounded-error quantum polynomial time. The class $\mathsf{PP}$ is extraordinarily powerful: it contains $\mathsf{NP}$ and many other problems that are not believed to be efficiently solvable. Thus, the existence of an efficient universal postselection procedure would imply that quantum computers can efficiently solve all problems in $\mathsf{PP}$.

The key idea underlying our argument is a reduction from a generic $\mathsf{PP}$ problem to a sequence of postselections, each with finite Born probability. This reduction allows us to rule out the existence of an efficient universal procedure unless the unlikely complexity collapse above holds. The complete proof of the proposition is provided in Appendix D of SM.

\emph{Discussion---}Witnessing the measurement-induced phase transitions has faced significant experimental challenges due to the postselection barrier. The absence of direct solutions has led to the suspicion that this effect might be fundamentally unobservable, i.e., it must require an exponentially large overhead. Our results demonstrate the opposite. In particular, we demonstrate that quantum imaginary time evolution serves as an algorithm to efficiently and deterministically postselect measurement outcomes, provided that the states being measured satisfy the exponential clustering property. We demonstrate this fact in the physical scenario of measured random circuits that exhibit entanglement phases depending on measurement rate. This procedure allows us to physically construct multiple copies of states that are conditioned on midcircuit measurements without the ills of the postselection barrier. 

\begin{figure}[t!]
    \centering
\includegraphics[width=\linewidth]{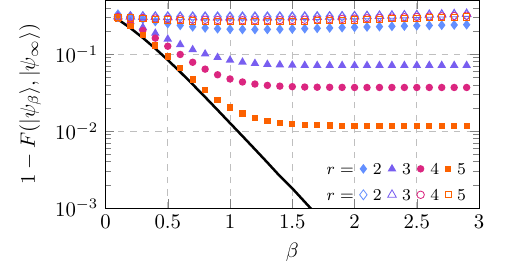}
    \caption{The infidelity $1 - F(\ket{\psi_{\beta}}, \ket{\psi_{\infty}})$ versus $\beta$ for various $r$. Here, $\ket{\psi_{\beta}}$ is the output of DQITE on $\ket{\psi}$ that is the output of a $n=20$ qubit measured random quantum circuits with $L=n$ layers with measurement rates $p=0.5$ (solid markers) and $p=0.1$ (empty markers). $\ket{\psi_{\infty}}$ is the post-measurement state upon measuring the first qubit in $\ket{\psi}$. For the $p=0.5$ case, the expected infidelity between $\ket{\psi_{\infty}}$ and the exactly (nonunitary) imaginary-time-evolved state to finite $\beta$ is shown as the black curve, confirming the exponential scaling in~\eqref{eq:fidelity}.}
    \label{fig:fidelity}
\end{figure}

The exponential clustering property alone (in 1D) suffices to demonstrate that the corresponding circuit is classically simulable~\cite{Brandao2015Exponential}. However, we emphasize that our goal is not to demonstrate the classical simulability of the measurement-induced phase transition in the area law phase. Instead, our objective is to demonstrate that we can \emph{physically} postselect on measurement outcomes. It is the latter that suffers from the exponential postselection barrier.

Small-scale proof-of-principle experiments of our algorithm on current noisy quantum processors are as feasible as recent demonstrations of DQITE~\cite{Motta2020Determining}. Importantly, the principal contribution of this work is the reduction of the postselection barrier, i.e., the exponential sampling cost associated with observing individual quantum trajectories, to a polynomial overhead in the regimes identified here. In practice, however, full-scale implementation faces several limitations: near-term devices constrain the achievable system size and the effective imaginary time. Thus, while near-term experiments can validate the key ingredients of the protocol and probe its scaling on small instances, large-scale realizations will likely require future hardware with significantly improved fidelities or early fault-tolerant capabilities~\cite{Preskill2025Beyond}.

Finally, under the mild complexity theoretic assumption that $\mathsf{BQP} \neq \mathsf{PP}$, we proved that an efficient deterministic postselection algorithm cannot exist for arbitrary quantum states. This opens several avenues for future study. For instance, it remains unclear whether certain nontrivial long-range-entangled states that violate exponential clustering might nevertheless admit efficient deterministic postselection. Another intriguing direction is whether the deterministic postselection developed here can be applied to other phenomena in which postselection poses a major challenge, such as deep thermalization~\cite{Ippoliti2022Solvable,Vairogs2025}.

\vspace{8pt}
\emph{Acknowledgment}---The authors thank Nick Hunter-Jones and Matteo Ippoliti for helpful discussions. This work was supported in part by the U.S. Department of Energy, Office of Science, Office of Advanced Scientific Computing Research, through the Quantum Internet to Accelerate Scientific Discovery Program, and in part by the LDRD program at Los Alamos. S.M.  also acknowledges support from the Center for Nonlinear Studies. B.Y. also acknowledges support from the U.S. Department of Energy, Office of Science, Basic Energy Sciences, under Early Career Award No.~DE-SC0024524.

\bibliography{references}

\vfill
\clearpage

\newpage

\appendix

\setcounter{page}{1}
\renewcommand\thefigure{\thesection\arabic{figure}}
\setcounter{figure}{0} 

\onecolumngrid

\begin{center}
\large{ Supplemental Material for \\ Deterministic quantum trajectory via imaginary time evolution}
\end{center}

\begin{center}
Shivan Mittal$^{1,2}$ and Bin Yan$^{1}$\\
\small{$^{1}$\textit{Theoretical Division, Los Alamos National Laboratory, Los Alamos, New Mexico 87545, USA}}\\
\small{$^{2}$\textit{Department of Physics, University of Texas at Austin, Austin, TX 78712, USA }}\\
\small{(Dated: \today)}
\end{center}

\section{End-to-end algorithm}

\algnewcommand\algiven{\textbf{Given:}}
\algnewcommand\Given{\item[\algiven]}
\algnewcommand\alprecomp{\textbf{Pre-compute:}}
\algnewcommand\Precomp{\item[\alprecomp]}
\algnewcommand\almain{\textbf{Main:}}
\algnewcommand\Main{\item[\almain]}
\renewcommand{\tablename}{Algorithm}

Here, we provide the pseudo code that outlines our end-to-end algorithm to deterministically prepare quantum trajectories. The algorithm is composed of two parts: 1) the main function that simulates the entire target trajectory with repeated unitary evolution and measurements (whose measurement outcomes are given) and 2) the deterministic post-selection subroutine used to prepare the post-measurement state of each individual measurement. The main function provides the big-picture chronology of events, and the subroutine is the quantum imaginary time evolution algorithm with respect to the Hamiltonian whose ground state is the post-measurement state.

In brief, in the pre-computation stage, we witness a quantum trajectory that we wish to deterministically, repeated prepare. Before we prepare our first deterministic sample, we iterate through the locations of measurements and learn unitaries that replicate the effect of post-selecting onto the witnessed trajectory of the pre-computation stage. After the end of the learning procedure, we apply all the learned unitaries [$O(\text{poly}(n))$ many] on the initial state to directly and deterministically prepare copies of the witnessed quantum trajectory.

%Our subroutine is simply a abstraction and paraphrasing of the deterministic quantum imaginary time evolution of Ref.~\cite{Motta2020Determining}. The number of samples $S$ over there is the accuracy to which the linear system is learned to then solve for the unitaries approximation of the imaginary time evolution.

\bigskip
\bigskip
\hrule
\begin{center}
{\textbf{Algorithm I: Deterministic Quantum Trajectories}}
\end{center}
\begin{algorithmic}[1]
    \Given number of qubits $n$, measurement rate $p$, number of layers $L$
    \Precomp Run 1 instance of measurement random circuit with the parameters $n, p$ and $L$ on any fixed initial state $\ket{\varphi}$, and store the description and locations of $2$-qubit gates and location and outcomes of mid-circuit measurements in classical memory. Initialize a list $G$ of gates with only unitary gates upto the first measurement in the circuit. Refer to list of measurement location and outcomes $\mathcal{M}:=\{(i, m_i)\}_{i}$.
    \Main
    \If{this is the first copy of the quantum trajectory}
        \For{$(i, m_i) \in \mathcal{M}$} 
        \State Perform all unitary operations in $G$ on the $\ket{\varphi}$.
        \State Perform Algorithm II($G, \ket{\varphi}, i, m_i$) to post-select on the $i^{\text{th}}$ measurement outcome.\footnote{The output of Algorithm II($G, \ket{\varphi}, i, m_i$) contains the unitary approximation quantum imaginary time evolution that implements the deterministic post-selection for the $i^{\text{th}}$ measurement.}
        \State Append the output of Algorithm II($G, \ket{\varphi}, i, m_i$) into $G$.
        \EndFor
    \EndIf\\
    \Return the state after all unitary operations in $G$ on $\ket{\varphi}$.
\end{algorithmic}

\bigskip
\bigskip
\hrule
\begin{center}
\textbf{Algorithm II: Deterministic Post-selection Subroutine}
\end{center}
    
    \begin{algorithmic}[1]
    \Given $G, \ket{\varphi}, i, m_i$  
    \Main
    \State Set $\text{count}=0$
    \While{$\text{count} < $number of samples $S$}
    \State Prepare a fresh sample of state $G\ket{\varphi}$
    \State Measure to perform tomography in the $\xi$-neighborhood of the $i^{\text{th}}$ qubit, where $\xi$ denotes the correlation length.
    \State $\text{count}=\text{count} + 1$
    \EndWhile
    \State Classically solve a linear systems $A\mathbf{x} = \mathbf{b}$ type problem for vectors of dimension $4^{2\xi}$ to determine the unitaries that approximately implement quantum imaginary time evolution on the $i^{\text{th}}$ qubit.\\
    \Return the list of unitaries that result from the previous step 
    \end{algorithmic}

\section{Bounding the failure probability}

As discussed in the main text, the total cost of the deterministic post-selection protocol is on the order of
\begin{equation}\label{eq:sm:scaling}
\mathcal{O}\left(M^2\epsilon^{-1}\Delta^{-2}\log^2{(c^{1/2}M\epsilon^{-1})}\right),
\end{equation}
where $M$ is the total number of single qubit projective measurements, $\epsilon$ is the desired final error tolerance, $\Delta$ is the energy gap above the ground state of the Hamiltonian $H$ of the imaginary time evolution (which is $2$ in our case). $c = (1-P)/P$, where $P$ is the Born probability of observing the target outcome if a physical measurement were performed.

The above estimate is based on an assumption that for each measurement, $P$---the probability of getting the desired the post-measurement state corresponding to a physical measurement being performed---is finite. However, in general this condition may not be satisfied. For instance, suppose a state before a measurement is $\ket{\psi}$. We would like to implement the protocol to deterministically project $|\psi\rangle$ to the post-measurement state $\ket{\psi_0}$ (corresponding to the measurement getting the outcome $0$). $P$ is the Born probability to witness the $0$ outcome. It could be that $P$ is extremely (exponentially) small. In this case, if we were to replace the measurement on $\ket{\psi}$ with the imaginary time evolution algorithm, it would take that algorithm exponentially long time to drive $\ket{\psi}$ to the post measurement state $\ket{\psi_0}$.

In our case, at each measurement position, the reduced density matrix of the qubit to be measured is very close to the maximally mixed state, therefore $P \sim 1/2$. However, in principle it is possible that for some instances
%(random circuit realizations) 
$P$ may take very small values.

One can quantify the distribution of $P$ and rigorously bound the probability that the protocol fails (in the sense that it needs an exponentially long runtime). However, the distribution of $P$ depends on both the particular circuit structure and the measurement outcomes of all previous measurements (it is a conditional probability). Moreover, even $P$ is highly concentrated to $1/2$ for the setup of measurement-induce phase transition as considered in this study, for some other scenarios where we would like to deterministically prepare quantum trajectories, $P$ may have a high probability to be very small, hindering the applicability of the proposed method.

Here, we provide a bound of the failure probability of our protocol that bypasses the above difficulties. The crucial observation is that the trajectory ($0$ and $1$ post-measurement state pattern) to be deterministically prepared is not manually determined, but is physically observed. That is, an experimentalist runs the experiment once and observes a single trajectory. We then use the proposed method to prepare this trajectory many times. Intuitively, \emph{if one trajectory contains some measurement outcomes whose probabilities are very small, then this trajectory is unlikely to be observed by the experimentalist in the first place}. This argument is formalized in the following.

Let $\delta$ be a threshold probability of our choice such that for a given single measurement outcome, if its probability $P$ is smaller than $\delta$, then the complexity of the imaginary time evolution algorithm is too high. Let
\begin{equation}
    \delta \equiv 1/\left[M\cdot{\rm poly}(n,M)\right],
\end{equation}
where ${\rm poly}(n,M)$ is an arbitrary polynomial of the number of qubits $n$ and the number of measurements $M$. 
If $P > \delta$, then the complexity of the protocol is still polynomial (verify by inserting $\delta$ in place of $P$ in the expression for $C$ in \autoref{prop:beta}, in Appendix~C. Otherwise we say the protocol fails (This is overkilling, therefore we are estimating a lose upper bound for the failure probability). 

For a given \emph{observed} trajectory of $M$ measurements, the probability that each single one of $M$ measurement outcomes has probability greater than $\delta$ is then
\begin{equation}
    (1-\delta)^M \approx 1- 1/{\rm poly}(n,M).
\end{equation}
Therefore, the failure probability---the probability that the experimentalist physically observes a trajectory, which contains at least one single measurement outcome of small probability (smaller than $\delta$ by our standard) to apply the deterministic post-selection method---is upper bounded by $1/{\rm poly}(n,M)$.

\section{Errors in imaginary time evolution}

Here, we prove the exponential suppression of error in imaginary time evolution as claimed in~\eqref{eq:fidelity} in the main text. For any quantum state $|\psi\rangle$, applying the imaginary time evolution $e^{-\beta H}$ for infinite $\beta$ projects it to the ground state subspace of $H$, given that $|\psi\rangle$ has a non-zero projection in that subspace.
Denote
\begin{equation}
     |\psi_\beta\rangle \equiv \mathcal{N}_\beta e^{-\beta H}|\psi\rangle,
\end{equation}
where $\mathcal{N}_\beta$ is the normalization factor. The following proposition quantifies the closeness of $|\psi_\beta\rangle$ when $\beta$ is finite to the state at infinite $\beta$.

\begin{proposition}\label{prop:beta}
For any $|\psi\rangle$, let $P$ denote the non-zero probability of $|\psi\rangle$ in the ground state subspace of $H$. The fidelity between $|\psi\rangle_\beta$ at sufficiently large $\beta$ and $|\psi\rangle_\infty$ is bounded by
\begin{equation}
\begin{aligned}
F(|\psi_\beta\rangle,|\psi_\infty\rangle) \ge 1 -  C e^{-2\beta \Delta},
\end{aligned}
\end{equation}
where $\Delta$ is the energy gap above the ground state of $H$ and $C = (1-P)/P$.
\end{proposition}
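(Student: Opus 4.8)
The plan is to work in the eigenbasis of $H$ and to exploit the fact that imaginary time evolution acts as the identity on the ground-state subspace while strictly contracting everything orthogonal to it. First I would shift $H$ so that the ground-state energy is zero; this changes $e^{-\beta H}$ only by an overall scalar that is absorbed into the normalization $\mathcal{N}_\beta$ and therefore leaves $|\psi_\beta\rangle$ unchanged. Letting $\Pi_0$ denote the orthogonal projector onto the ground-state subspace and $\Pi_\perp = I - \Pi_0$, I would decompose the initial state as
\begin{equation}
|\psi\rangle = \sqrt{P}\,|g\rangle + \sqrt{1-P}\,|e\rangle,
\end{equation}
where $|g\rangle \equiv \Pi_0|\psi\rangle/\sqrt{P}$ and $|e\rangle \equiv \Pi_\perp|\psi\rangle/\sqrt{1-P}$ are the normalized components in the two orthogonal subspaces, so that $P = \langle\psi|\Pi_0|\psi\rangle$ matches the stated Born probability. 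Since $e^{-\beta H}$ commutes with both projectors, it fixes $|g\rangle$ and maps $|e\rangle$ into another vector of the excited subspace; consequently $|\psi_\infty\rangle = |g\rangle$.

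The core of the argument is then a direct computation of the two ingredients of the squared-overlap fidelity $F = |\langle\psi_\infty|\psi_\beta\rangle|^2$. For the numerator I would use orthogonality of the two subspaces to show that the excited component contributes nothing, giving the $\beta$-independent value $\langle g|e^{-\beta H}|\psi\rangle = \sqrt{P}$. For the normalization the same orthogonality yields $\|e^{-\beta H}|\psi\rangle\|^2 = P + (1-P)\,\eta$, where $\eta \equiv \|e^{-\beta H}|e\rangle\|^2$. Because $|e\rangle$ is supported entirely on eigenstates of energy at least $\Delta$, expanding $\eta$ in the eigenbasis bounds every factor by $e^{-2\beta E_i} \le e^{-2\beta\Delta}$, hence $\eta \le e^{-2\beta\Delta}$.

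Assembling these gives the closed form $F = P/[P + (1-P)\eta] = 1/[1 + (1-P)\eta/P]$, and applying the elementary inequality $1/(1+x) \ge 1 - x$ for $x \ge 0$ together with the bound on $\eta$ yields $F \ge 1 - \tfrac{1-P}{P}e^{-2\beta\Delta}$, which is exactly the claim with $C = (1-P)/P$.

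The calculation is elementary once the decomposition is fixed, so I would not expect a genuine obstacle; the step I would flag as the conceptual crux is recognizing that the overlap in the numerator is \emph{exactly} $\sqrt{P}$ independent of $\beta$, so that all $\beta$-dependence is quarantined in the denominator. This isolates the gap-induced exponential suppression cleanly and also clarifies the role of the ``sufficiently large $\beta$'' hypothesis, which merely ensures $Ce^{-2\beta\Delta} < 1$ so that the lower bound is nonvacuous. A final point worth checking is the fidelity convention: the derivation forces $F$ to be the squared overlap $|\langle\cdot|\cdot\rangle|^2$, since the amplitude convention would introduce a spurious factor of $1/2$ incompatible with the stated coefficient $C$.
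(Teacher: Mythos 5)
Your proof is correct and follows essentially the same route as the paper's: a spectral decomposition of $|\psi\rangle$ with respect to $H$, an exact closed form for the squared-overlap fidelity, the gap bound $e^{-2\beta(E_i-E_0)}\le e^{-2\beta\Delta}$ on the excited contribution, and the elementary inequality $1/(1+x)\ge 1-x$. Two minor refinements over the paper are worth noting: your projector-based decomposition $|\psi\rangle=\sqrt{P}\,|g\rangle+\sqrt{1-P}\,|e\rangle$ handles a degenerate ground space directly (matching the proposition's ``ground state subspace'' wording, where the paper's eigenbasis sum singles out a unique $|0\rangle_A$), and your use of the exact inequality $1/(1+x)\ge 1-x$ replaces the paper's large-$\beta$ approximation $1/(1+Ce^{-2\beta\Delta})\approx 1-Ce^{-2\beta\Delta}$, so your bound holds for all $\beta\ge 0$ and, as you observe, the ``sufficiently large $\beta$'' hypothesis merely ensures the bound is nonvacuous.
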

\begin{proof}
Without loss of generality, suppose the Hamiltonian $H$ applies non-trivially to subsystem $A$, that is, $H = H_A \otimes \mathbb{I}$. One can always decompose
\begin{equation}
    |\psi\rangle = \sum_{i=0}^{d_A-1} c_i |i\rangle_A|\phi_i\rangle,
\end{equation}
where $|i\rangle_A$ are eigenstates of $H_A$ with eigenvalues $E_i$, and $|0\rangle_A$ is the ground state; $d_A$ is the Hilbert space dimension of $A$. The imaginary time-evolved state is then
\begin{equation}
            |\psi_\beta\rangle = \mathcal{N}_\beta \sum_{i=0}^{d_A-1} c_i e^{-\beta E_i} |i\rangle_A|\phi_i\rangle,
\end{equation}
where the normalization factor is
\begin{equation}
    1/\mathcal{N}^2_\beta = \sum_{i=0}^{d_A-1} c^2_i e^{-2\beta E_i}.
\end{equation}
In particular,
\begin{equation}
        |\psi_\infty\rangle = |0\rangle_A|\phi_0\rangle.
\end{equation}

The fidelity between $|\psi_\beta\rangle$ and $|\psi_\infty\rangle$ can be evaluated as
\begin{equation}
F(|\psi_\beta\rangle,|\psi_\infty\rangle) = \mathcal{N}^2_\beta c_0^2 e^{-2\beta E_0},
\end{equation}
which can be bounded by
\begin{equation}
\begin{aligned}
1/F(|\psi_\beta\rangle,|\psi_\infty\rangle)
=& 1 + \sum_{i=1}^{d_A-1}\frac{c_i^2}{c_0^2}e^{-2\beta (E_i-E_0)}\\
\leq & 1 + e^{-2\beta \Delta} \sum_{i=1}^{d_A-1}\frac{c_i^2}{c_0^2}\\
\equiv & 1 + c e^{-2\beta \Delta}.
\end{aligned}
\end{equation}
Here we used $\Delta = E_1 - E_0$.

Therefore,
\begin{equation}
F(|\psi_\beta\rangle,|\psi_\infty\rangle) \geq \frac{1}{1 + C e^{-2\beta \Delta}} \approx 1 -  c e^{-2\beta \Delta},
\end{equation}
where the approximation is true when $\beta$ is sufficiently large.
The trace distance is bounded by
\begin{equation}
T(|\psi_\beta\rangle,|\psi_\infty\rangle) \leq  c^{1/2} e^{-\beta \Delta}.
\end{equation}

\end{proof}

\section{Non-existence of efficient universal solutions}

For an $n$-qubit quantum state $|\psi\rangle$, projective measurement in the $Z$ basis of a single qubit collapses $|\psi\rangle$ to the post-measurement state with the measured qubit in either $|0\rangle$ or $|1\rangle$. Suppose the probability of observing $|0\rangle$ (or $|1\rangle$) is finite, say, $\sim 1/2$. Are there processes that can deterministically prepare the post measurement state corresponding to a given measurement outcome, given $|\psi\rangle$ as an input? 

In the main text we demonstrated that one can use the deterministic quantum imaginary time evolution algorithm to achieve this with an accuracy $\epsilon$ in $\mathcal{O}(1/\epsilon)$ time, conditioned on that the state $|\psi\rangle$ satisfies the exponential clustering property. Here, we show that there does not exist such efficient processes that work for any state $|\psi\rangle$, unless $\mathsf{BQP} = \mathsf{PP}$. By efficiency we mean achieving an error $\epsilon$ in $\mathcal{O}(1/\epsilon^\gamma)$ time.

Suppose such a process exists and denote it as a quantum channel $\Lambda$. Let
\begin{equation}\label{eq:ppstate}
    |\phi\rangle = \alpha |0\rangle|\phi_0\rangle + \beta |1\rangle|\phi_1\rangle
\end{equation}
be a two-qubit state, where $\alpha/\beta \sim 2^{-n}$ is an exponentially small number. 

Note that we only assumed that $\Lambda$ can deterministically post-select a measurement outcome that has a finite probability of occurrence. Since the probability of post-measurement state $|0\rangle|\phi_0\rangle$ is exponentially small, it cannot be directly post-selected by $\Lambda$.

Consider a second register with $m$ qubits, each initialized in state $|0\rangle$. Applying the Control-Hadamard gates between the first qubit and each of the qubits in the second register prepares the total state into
\begin{equation}
     |\psi\rangle = \alpha |0\rangle|\phi_0\rangle |0\rangle^{\otimes m} + \beta |1\rangle|\phi_1\rangle |+\rangle^{\otimes m}.
\end{equation}
For this state, the probability of measuring $|0\rangle$ for any qubit in the second register is $\sim 1/2$. Thus, by assumption, we can apply $\Lambda$ to deterministically post-select this state. Suppose that $k$ qubits in the second register are post-selected to $|0\rangle$, the resulting total state becomes
\begin{equation}
     |\psi_k\rangle \propto \left(\alpha |0\rangle|\phi_0\rangle |0\rangle^{\otimes m-k} + \frac{1}{\sqrt{2^k}} \beta |1\rangle|\phi_1\rangle |+\rangle^{\otimes m-k}\right)|0\rangle^{k}.
\end{equation}
For this state, the probability of measuring $|0\rangle$ for any remaining qubit in the second register is 
\begin{equation}
    \frac{\alpha^2 + \beta^2/2^{k+1}}{\alpha^2 + \beta^2/2^k} \approx \frac{2^{-2n}+2^{-(k+1)}}{2^{-2n} + 2^{-k}},
\end{equation}
which is always greater than $1/2$. Thus, one can keep applying $\Lambda$ to all the remaining qubits one-by-one in the second register, until all the qubits in the second register are post-selected to $|0\rangle$. The resulting state is
\begin{equation}
     \left(\alpha |0\rangle|\phi_0\rangle  + \frac{1}{\sqrt{2^m}} \beta |1\rangle|\phi_1\rangle \right)|0\rangle^{m}.
\end{equation}
Assuming $m \gg n$ but still linear in $n$, one can thus achieve arbitrary precision in preparing the outcome $|0\rangle|\phi_0\rangle$ out of state~\eqref{eq:ppstate}, and simultaneously maintaining the overall polynomial runtime.

The above procedure allows us to use $\Lambda$ to deterministically post-selection the $0$ branch in state~\eqref{eq:ppstate}. However, applying Aaronson's algorithm in~[23] of the main text, this would further imply that one can solve any problem in $\mathsf{PP}$ by constructing a polynomial quantum circuit and efficiently post-selecting its output state. This implies the implausible consequence $\mathsf{BQP} = \mathsf{PostBQP} = \mathsf{PP}$.

%\subsection{New Figure 4}

%\begin{figure}[ht!]
%    \centering
%\includegraphics[width=0.5\linewidth]{fidelity_referee_a.pdf}
%    \caption{The infidelity $1 - F(\ket{\psi_{\beta}}, \ket{\psi_{\infty}})$ versus $\beta$ for various $r$. Here, $\ket{\psi_{\beta}}$ is the output of DQITE on $\ket{\psi}$ that is the output of a $n=20$ qubit measured random quantum circuits with $L=n$ layers and measurement rate $p=0.5$. $\ket{\psi_{\infty}}$ is the post measurement state upon measuring the first qubit in $\ket{\psi}$. Black curve corresponds to the infidelity between $\ket{\psi_{\infty}}$ and the exactly (not-unitary) imaginary-time-evolved state to finite $\beta$, confirming the exponential scaling in~\eqref{eq:fidelity}.}
%    \label{fig:fidelity_referee_a}
%\end{figure}

%\begin{figure}[ht!]
%    \centering
%\includegraphics[width=0.5\linewidth]{fidelity_referee_v.pdf}
%    \caption{The infidelity $1 - F(\ket{\psi_{\beta}}, \ket{\psi_{\infty}})$ versus $\beta$ for various $r$. Here, $\ket{\psi_{\beta}}$ is the output of DQITE on $\ket{\psi}$ that is the output of a $n=20$ qubit measured random quantum circuits with $L=n$ layers and measurement rate $p=0.1$. $\ket{\psi_{\infty}}$ is the post measurement state upon measuring the first qubit in $\ket{\psi}$. Black curve corresponds to the infidelity between $\ket{\psi_{\infty}}$ and the exactly (not-unitary) imaginary-time-evolved state to finite $\beta$, confirming the exponential scaling in~\eqref{eq:fidelity}.}
%    \label{fig:fidelity_referee_v}
%\end{figure}

\end{document}